\newcommand{\comment}[1]{}
\newcommand{\longnote}[2][4.9in]{\fcolorbox{black}{yellow}{\parbox{#1}{\color{black} #2}}}
\def\tn{\textnormal}
\def\mc{\mathcal}
\def\ZZ{{\mathbb Z}}
\def\RR{{\mathbb R}}
\def\CC{{\mathbb C}}
\def\PP{{\mathbb P}}
\def\NN{{\mathbb N}}
\def\Hom{\tn{Hom}}
\def\Fun{\tn{Fun}}
\def\Ob{\tn{Ob}}
\def\SEL*{\tn{SEL*}}
\def\hsp{\hspace{.3in}}
\def\singleton{{\{*\}}}
\def\Loop{{\mcL oop}}
\def\LoopSchema{{\parbox{.5in}{\fbox{\xymatrix{\LMO{s}\ar@(l,u)[]^f}}}}}
\def\to{\rightarrow}
\def\from{\leftarrow}
\def\cross{\times}
\def\taking{\colon}
\def\too{\longrightarrow}
\def\fromm{\longleftarrow}
\def\ss{\subseteq}
\def\iso{\cong}
\def\|{{\;|\;}}
\def\m1{{-1}}
\def\op{^\tn{op}}
\def\wt{\widetilde}
\def\ol{\overline}
\newcommand{\LMO}[1]{\stackrel{#1}{\bullet}}
\newcommand{\LTO}[1]{\stackrel{\tt{#1}}{\bullet}}
\newcommand{\LA}[2]{\ar[#1]^-{\tn {#2}}}
\def\ullimit{\ar@{}[rd]|(.3)*+{\lrcorner}}
\def\urlimit{\ar@{}[ld]|(.3)*+{\llcorner}}
\def\lllimit{\ar@{}[ru]|(.3)*+{\urcorner}}
\def\lrlimit{\ar@{}[lu]|(.3)*+{\ulcorner}}
\def\ulhlimit{\ar@{}[rd]|(.3)*+{\diamond}}
\def\urhlimit{\ar@{}[ld]|(.3)*+{\diamond}}
\def\llhlimit{\ar@{}[ru]|(.3)*+{\diamond}}
\def\lrhlimit{\ar@{}[lu]|(.3)*+{\diamond}}
\newcommand{\clabel}[1]{\ar@{}[rd]|(.5)*+{#1}}
\newcommand{\TriRight}[7]{\xymatrix{#1\ar[dr]_{#2}\ar[rr]^{#3}&&#4\ar[dl]^{#5}\\&#6\ar@{}[u] |{\Longrightarrow}\ar@{}[u]|>>>>{#7}}}
\newcommand{\TriLeft}[7]{\xymatrix{#1\ar[dr]_{#2}\ar[rr]^{#3}&&#4\ar[dl]^{#5}\\&#6\ar@{}[u] |{\Longleftarrow}\ar@{}[u]|>>>>{#7}}}
\newcommand{\TriIso}[7]{\xymatrix{#1\ar[dr]_{#2}\ar[rr]^{#3}&&#4\ar[dl]^{#5}\\&#6\ar@{}[u] |{\Longleftrightarrow}\ar@{}[u]|>>>>{#7}}}
\newcommand{\arr}[1]{\ar@<.5ex>[#1]\ar@<-.5ex>[#1]}
\newcommand{\arrr}[1]{\ar@<.7ex>[#1]\ar@<0ex>[#1]\ar@<-.7ex>[#1]}
\newcommand{\arrrr}[1]{\ar@<.9ex>[#1]\ar@<.3ex>[#1]\ar@<-.3ex>[#1]\ar@<-.9ex>[#1]}
\newcommand{\arrrrr}[1]{\ar@<1ex>[#1]\ar@<.5ex>[#1]\ar[#1]\ar@<-.5ex>[#1]\ar@<-1ex>[#1]}
\newcommand{\To}[1]{\xrightarrow{#1}}
\newcommand{\Too}[1]{\xrightarrow{\ \ #1\ \ }}
\newcommand{\From}[1]{\xleftarrow{#1}}
\newcommand{\Adjoint}[4]{\xymatrix@1{{#2}\ar@<.5ex>[r]^-{#1} &{#3} \ar@<.5ex>[l]^-{#4}}}
\def\id{\tn{id}}
\def\Cat{{\bf Cat}}
\def\Monad{{\bf Monad}}
\def\Vect{\text{Vect}}
\def\Rep{{\bf Rep}}
\def\List{\tn{List}}
\def\Exc{\tn{Exc}}
\def\Rel{{\bf Rel}}
\def\Type{{\bf Type}}
\def\Set{{\bf Set}}
\def\set{{\text \textendash}{\bf Set}}
\def\bhline{\Xhline{2\arrayrulewidth}}
\def\bbhline{\Xhline{2.5\arrayrulewidth}}
\def\colim{\mathop{\tn{colim}}}
\def\mcC{\mc{C}}
\def\mcD{\mc{D}}
\def\mcI{\mc{I}}
\def\mcK{\mc{K}}
\def\mcL{\mc{L}}
\def\mcP{\mc{P}}
\def\mcS{\mc{S}}
\def\mcW{\mc{W}}
\def\mcX{\mc{X}}
\newcommand{\subsub}[1]{\setcounter{subsubsection}{\value{theorem}}\subsubsection{#1}\addtocounter{theorem}{1}}
\newcommand{\back}[1]{\stackrel{\from}{#1}\!}
\newcommand{\Kls}[1]{{\bf Kls}({#1})}
\newcommand{\bkls}[1]{{\text \textendash}\Kls{#1}}
\newcommand{\kls}[1]{{\text \textendash}\wt{\bf Kls}({#1})}
\def\Dist{\text{Dist}}
\def\Supp{{\bf Supp}}
\def\Inp{\tn{Inp}}
\def\Tur{\tn{Tur}}
\def\Halt{\{\text{Halt}\}}
\def\Tape{{T\!ape}}
\def\Sch{{\bf Sch}}
\def\Fin{{\bf Fin}}
\newtheorem{theorem}{Theorem}[subsection]
\newtheorem{lemma}[theorem]{Lemma}
\newtheorem{proposition}[theorem]{Proposition}
\theoremstyle{remark}
\newtheorem{remark}[theorem]{Remark}
\newtheorem{example}[theorem]{Example}
\newtheorem{question}[theorem]{Question}
\newtheorem{guess}[theorem]{Guess}
\theoremstyle{definition}
\newtheorem{definition}[theorem]{Definition}
\newtheorem{notation}[theorem]{Notation}
\newcommand{\mainCatLarge}[1]{ 
	\stackrel{#1}{
		\parbox{4.5in}{\fbox{\parbox{4.4in}{\begin{center}\underline{{\tt Employee} manager worksIn $\simeq$ {\tt Employee} worksIn}\hsp  \underline{{\tt Department} secretary worksIn $\simeq$ {\tt Department}}\end{center}~\\\\\\
			\xymatrix@=8pt{&\LTO{Employee}\ar@<.5ex>[rrrrr]^{\tn{worksIn}}\ar@(l,u)[]+<5pt,10pt>^{\tn{manager}}\ar[dddl]_{\tn{first}}\ar[dddr]^{\tn{last}}&&&&&\LTO{Department}\ar@<.5ex>[lllll]^{\tn{secretary}}\ar[ddd]^{\tn{name}}\\\\\\\LTO{FirstNameString}&&\LTO{LastNameString}&~&~&~&\LTO{DepartmentNameString}
			}
		}}}
	}
}
\begin{document}

\title{Kleisli database instances}

\author{David I. Spivak}

\address{Department of Mathematics, Massachusetts Institute of Technology, Cambridge MA 02139}

\email{dspivak@gmail.com}

\thanks{This project was supported by ONR grant N000141010841.}

\begin{abstract}

We use monads to relax the atomicity requirement for data in a database. Depending on the choice of monad, the database fields may contain generalized values such as lists or sets of values, or they may contain exceptions such as various types of nulls. The return operation for monads ensures that any ordinary database instance will count as one of these generalized instances, and the bind operation ensures that generalized values behave well under joins of foreign key sequences. Different monads allow for vastly different types of information to be stored in the database. For example, we show that classical concepts like Markov chains, graphs, and finite state automata are each perfectly captured by a different monad on the same schema. 

\end{abstract}

\maketitle

\tableofcontents

\section{Introduction}

Monads are category-theoretic constructs with wide-ranging applications in both mathematics and computer science. In \cite{Mog}, Moggi showed how to exploit their expressive capacity to incorporate fundamental programming concepts into purely functional languages, thus considerably extending the potency of the functional paradigm. Using monads, concepts that had been elusive to functional programming, such as state, input/output, and concurrency, were suddenly made available in that context.

In the present paper we describe a parallel use of monads in databases. This approach stems from a similarity between categories and database schemas, as presented in \cite{Sp1}. The rough idea is as follows. A database schema can be modeled as a category $\mcC$, and an ordinary database instance is a functor $\delta\taking\mcC\to\Set$. Given a monad $T\taking\Set\to\Set$, a {\em Kleisli $T$-instance} is a functor $$\delta\taking\mcC\to\Kls{T},$$ where $\Kls{T}$ is the Kleisli category of $T$, as will be explained in Section \ref{sec:monads and Kleisli}.

Values in a Kleisli $T$-instance are less restricted than ordinary values; we call these generalized values {\em $T$-values}. In particular, within a Kleisli instance we are permitted to relax the atomicity requirement for data (a requirement found in Codd's notion of {\em first normal form}, see \cite{Cod}), while still maintaining referential integrity. For example, if $T$ is the List monad then $T$-values are lists, so a single entry in a foreign key or data column could contain a list of entries of the target type. Similarly, $T$-values might include assurance information (i.e. a number between 0\% and 100\%), in which case each datum would come equipped with a probability of correctness. Importantly, the monadicity of $T$ ensures that the extra information in $T$-values will naturally and predictably synthesize along any path obtained by joining a sequence of foreign keys. One can think of flattening lists of lists, of multiplying probabilities, or of propagating exceptions.

Kleisli instances offer additional functionality in a database, and such functionalities vary widely as the category of monads on $\Set$ is quite rich. Having a variety of available possibilities, the database architect can choose those that best fit the current needs. Moreover, a morphism between monads $T\to T'$ induces a functor from the category of $T$-instances to the category of $T'$-instances on the schema. In future work we will show that one can vary the choice of monad throughout the database schema, thus greatly increasing the expressive power of database schemas. By incorporating these features within the design specification of the database, as opposed to applying them from without, we reduce the barrier between database and program. Whereas normally such functionality is distributed throughout the technology stack, the monadic approach leads to a centralization of features, increasing our ability to manage the system with certainty.

The monad formalism also enables more economical schema design. For example, typically one encodes a set-membership relation with three tables, e.g. $$\fbox{$\LTO{element}\fromm\LTO{membership}\too\LTO{set}$},$$ and to encode that A has as an attribute a list of B's requires requires even more overhead. However, the same information can be captured with a single column when one employs the Multiset or List monad. 

As an aside, the monad formalism also yields a surprising coincidence. We show that there is a database schema $\Loop$ such that, for different choices of monads $T$, the set of Kleisli $T$-instances on $\Loop$ can be interpreted in terms of classical mathematical subjects. 
\begin{align}\label{dia:riddle}
\begin{tabular}{| l l l |}
\bhline
{\bf Classical mathematical subject}&{\bf monad $T$}&{\bf Internal reference}\\\bbhline
Discrete dynamical systems & Atomic & Example \ref{ex:loop}\\\hline
Graphs & Multiset & Example \ref{ex:graphs}\\\hline
Markov chains & $\Dist$ & Example \ref{ex:markov}\\\hline
Finite state automata & $\Inp^U$ & Example \ref{ex:fsa}\\\hline
Turing machines & $\Tur^{\{0,1\}}_{\{L,R,W_0,W_1\}}$& Example \ref{ex:turing}\\\hline
\parbox{1.6in}{``Jordan Canonical Form" (vector spaces with endomorphism)}&$\Vect$&Example \ref{ex:rep theory}\\\hline
Multigraphs&Free rig&Example \ref{ex:multigraphs}\\
\bhline
\end{tabular}
\end{align}

Monads have been applied to databases in previous work (e.g. \cite{BNT}, \cite{Gru}, \cite{LT}, and \cite{Maj})), but the sense in which they are applied is totally different than that which is presented in this paper. In each of these papers, monads were applied to make sense of queries and, in particular, aggregate functions on collections (e.g. counts and sums). The present paper, on the other hand, deals with the employment of monads within the database schema to provide additional expressivity in each field, e.g. allowing non-atomic data or annotating data with probability of correctness. While previous work may simplify aggregation in our context, it should be seen as orthogonal to the ideas presented here.

In this paper we assume the reader has encountered categories before, but it is not totally necessary. Readers with either very much or very little category theory may benefit by reading Section \ref{sec:CatDB} and Example \ref{ex:monad} and then skipping directly to Section \ref{sec:examples}. Readers with some background but who wish to review monads or their Kleisli categories will hopefully be satisfied with the brief overview in the intermediate sections. For a good reference on category theory, and monads in particular, one should consult \cite{Awo} or \cite{BW}.

We begin this paper in Section \ref{sec:background} with a brief review of the categorical model of databases, as well as some background on monads and their Kleisli categories. We discuss a new application of monads to databases in Section \ref{sec:Kleisli instances}. In section \ref{sec:examples} we offer several examples that may be of interest, such as the List-instances. In Section \ref{sec:transformations} we discuss morphisms of monads, which for example allow one to transform ordinary atomic instances into List-instances. Finally in Section \ref{sec:future} we briefly discuss our plans for future work in this area.

\subsection{Acknowledgements}

I'd like to thank Steve Lack and Tom Leinster for their excellent answers to a question I posted on \href{http://mathoverflow.net/questions/55182/what-is-known-about-the-category-of-monads-on-set}{\texttt mathoverflow.net}, and I'd like to thank Allen Brown, Peter Gates, and Ka Yu Tam for many useful conversations.

\section{Background}\label{sec:background}

In this section we recount a simple category-theoretic model of databases, and then review basic material on monads.

\begin{notation}

Let $\Set$ denote the category whose objects are sets and whose morphisms are functions. Throughout the paper we will be careful to reserve the word {\em function} to refer to mappings between sets. In a general category $\mcK$ we use words like {\em arrow} or {\em map}, but never function, to refer to morphisms in $\mcK$. 

\end{notation}

\subsection{Categorical databases}\label{sec:CatDB}

We begin with some background on so-called categorical databases. Much more can be found in \cite{Sp1}.

Roughly, a database {\em schema} is a category presentation: it is given by a set of objects (which will be drawn as nodes), a set of generating arrows, and an equivalence relation on paths. We denote a path by writing its source object followed by a sequence of arrows. We denote an equivalence of paths using the $\simeq$-sign. For example, consider the following schema:
\begin{align}\label{dia:mainCatLarge}
\mainCatLarge{\mcC:=}
\end{align}
Here we see a graph with five vertices and six arrows, and underlined at the top we see two path equivalence (PE) statements.\footnote{The first PE statement, ``{\tt Employee} manager worksIn $\simeq$ {\tt Employee} worksIn", identifies a path of length 2 with a path of length 1. The second PE statement, ``{\tt Department} secretary worksIn $\simeq$ {\tt Department}", identifies a path of length 2 with a path of length 0.} This information generates a category: the free category on the graph, modulo the path equivalence relation. In fact, in \cite[3.4.1]{Sp1} the author defines a category $\Sch$, whose objects are schemas (presented categories) as above, and proves that $\Sch$ is equivalent to $\Cat$. From here on, we elide the difference between a schema (category presentation) and a category.

A schema $\mcC$ is supposed to describe the wiring of a database. We think of each object $c\in\Ob(\mcC)$ as representing a table and each arrow $f\taking c\to c'$ emanating from $c$ as representing a column of $c$ that takes values in table $c'$. Roughly, an {\em instance} on $\mcC$ is the actual data: more precisely, an instance assigns to each table a set of rows of data that conform to the specifications given by $\mcC$. For example, the schema represented in Diagram (\ref{dia:mainCatLarge}) describes the wiring of the following database instance: 
\begin{align}\label{dia:instance on maincat}
&\footnotesize
\begin{tabular}{| l || l | l | l | l |}\bhline
\multicolumn{5}{| c |}{{\tt Employee}}\\\bhline 
{\bf ID}&{\bf first}&{\bf last}&{\bf manager}&{\bf worksIn}\\\bbhline 101&Alan&Turing&103&q10\\\hline 102&Camille&Jordan&102&x02\\\hline 103&Andrey&Markov&103&q10\\\bhline
\end{tabular}&\hspace{.25in}\footnotesize
\begin{tabular}{| l || l | l |}\bhline
\multicolumn{3}{| c |}{{\tt Department}}\\
\bhline {\bf ID}&{\bf name}&{\bf secretary}\\\bbhline q10&Applied math&101\\\hline x02&Pure math&102\\\bhline
\end{tabular}
\end{align}\vspace{.1in}
\begin{align*}\footnotesize
\begin{tabular}{| l ||}\bhline
\multicolumn{1}{| c |}{{\tt FirstNameString}}\\\bhline
{\bf ID}\\\bbhline Alan\\\hline Alice\\\hline Andrey\\\hline Camille\\\hline David\\\hline\hspace{.25in}\vdots\\\bhline
\end{tabular}\hspace{.6in}\footnotesize
\begin{tabular}{| l ||}\bhline
\multicolumn{1}{| c |}{{\tt LastNameString}}\\\bhline
{\bf ID}\\\bbhline Arden\\\hline Hoover\\\hline Jordan\\\hline Markov\\\hline Turing\\\hline\hspace{.25in}\vdots\\\bhline
\end{tabular}\hspace{.6in}\footnotesize
\begin{tabular}{| l ||}\bhline
\multicolumn{1}{| c |}{{\tt DepartmentNameString}}\\\bhline
{\bf ID}\\\bbhline Applied math\\\hline Biology\\\hline Pure math\\\hline\hspace{.25in}\vdots\\\bhline
\end{tabular}
\end{align*}
Every table has an ID column and perhaps other columns. Counting tables in (\ref{dia:instance on maincat}) we find five, the number of nodes in (\ref{dia:mainCatLarge}); and counting the non-ID columns in (\ref{dia:instance on maincat}) we find six, the number of arrows in (\ref{dia:mainCatLarge}). 

In fact, we can see that Diagram (\ref{dia:instance on maincat}) constitutes an assignment of a set (of rows) to each node in $\mcC$ and a function to each arrow in $\mcC$. For example the node {\tt Employee} is assigned the set $\{101,102,103\}$ and the arrow {\bf manager}$\taking{\tt Employee}\to{\tt Employee}$ is assigned the function sending $101\mapsto 103$ and $102\mapsto102$ and $103\mapsto 103$. Thus an instance on schema $\mcC$ is precisely a functor $I\taking\mcC\to\Set$. The path equivalence relation on $\mcC$ ensures that the values behave in specified ways. For example, Alan Turing's manager is Andrey Markov, and these two men (are required to) work in the same department, Applied math. Similarly, the Secretary of Pure math is Camille Jordan, and he (by necessity) works in the Pure math department.

We summarize all this in a formal, if hasty, definition. A careful description is given in \cite{Sp1}.

\begin{definition}

A {\em schema} $\mcC$ is a small category presentation. An {\em instance on $\mcC$} is a functor $\delta\taking\mcC\to\Set$.

\end{definition}

Throughout this paper we will continually return to a couple examples.

\begin{example}\label{ex:atomic worksIn}

One of the most basic categories is the so-called free-arrow category. We simply add names to make it more reminiscent of databases.
$$\mcW:=\fbox{\xymatrix{\LTO{Employee}\LA{rr}{worksIn}&&\LTO{Department}}}$$

A instance $\delta\taking\mcW\to\Set$ consists of a set of employees, a set of departments, and a function mapping each employee to a department. For example 
$$\delta:=\left\{\hsp
\begin{tabular}{| l || l |}\bhline
\multicolumn{2}{| c |}{\tt{Employee}}\\\bhline 
{\bf ID}&{\bf WorksIn}\\\bbhline
Alice & Math\\\hline
Bob & EECS\\\hline
Carl & Ling\\\hline
Deb & EECS\\\hline
Fred & Math\\\hline
Jen & Bio\\\bhline
\end{tabular}
\hspace{.5in}
\begin{tabular}{| l ||}\bhline
\multicolumn{1}{| c |}{\tt{Department}}\\\bhline 
{\bf ID}\\\bbhline
Bio\\\hline
EECS\\\hline
Ling\\\hline
Math\\\hline
Music\\\bhline
\end{tabular}
\hsp\right\}
$$

\end{example}

\begin{example}\label{ex:loop}

The schema represented here $$\Loop:=\LoopSchema$$
has one object $s$ and one generating arrow $f$, but a countably infinite set $\{f^n\|n\in\NN\}$ of paths. An instance $\delta\taking\Loop\to\Set$ is often called a {\em discrete dynamical system}. It consists of a set, which we might think of as the set of states of the system, and a function from that set to itself, which we might think of as the ``next state" function. As with any database instance we can apply the Grothendieck construction (see \cite{Sp2}) and get a nice picture of the system. For example one might have
$$\delta:=
\tiny\begin{tabular}{| l || l |}\bhline
\multicolumn{2}{| c |}{\tt{s}}\\\bhline 
{\bf ID}&{\bf f}\\\bbhline
A & B\\\hline
B & C\\\hline
C & C\\\hline
D & B\\\hline
E & C\\\hline
F & G\\\hline
G & H\\\hline
H & G\\\hline
\end{tabular}
\hsp
\xmapsto{\;\;\text{pictured}\;\;}\hsp
\int\delta=\parbox{1.4in}{\fbox{\xymatrix@=7pt{
\LMO{A}\ar[rr]&&\LMO{B}\ar[rr]&&\LMO{C}\ar@(u,r)[]^~\\
\LMO{D\ar[urr]}&&\LMO{E}\ar[urr]\\
\LMO{F}\ar[rr]&&\LMO{G}\ar@/^.5pc/[rr]&&\LMO{H}\ar@/^.5pc/[ll]^~
}}}
$$

Discrete dynamical systems are commonly used in modeling \cite{San}. In fact, we will see throughout this paper that Kleisli instances on $\Loop$ are equivalent to structures of classical mathematical interest. A list of such examples is provided in the Introduction, Table (\ref{dia:riddle}).

\end{example}

\subsection{Monads and Kleisli categories}\label{sec:monads and Kleisli}

In this section we define monads on the category $\Set$ and their Kleisli categories. Monads can be defined on any category, but the discussion will be a bit simpler if we are content with specializing to $\Set$. One can replace $\Set$ with $\Type$, the category of types for any typed $\lambda$-calculus, in what follows.

\begin{definition}\label{def:monad}

A {\em monad} $\top$ on $\Set$ consists of a triple $\top:=(T,\eta,\mu)$, where $T\taking\Set\to\Set$ is a functor and $\eta\taking\id_{\Set}\to T$ and $\mu\taking T\circ T\to T$ are natural transformations, such that the following diagrams commute: 
\begin{align}
\label{dia:monad axiom1}\xymatrix{T\circ\id_\Set\ar[r]^-{\id_T\circ\eta}\ar@{=}[dr]&T\circ T\ar[d]^\mu\\&T}\\
\label{dia:monad axiom2}\xymatrix{\id_\Set\circ T\ar[r]^-{\eta\circ\id_T}\ar@{=}[dr]&T\circ T\ar[d]^\mu\\&T}\\
\label{dia:monad axiom3}\xymatrix{T\circ T\circ T\ar[r]^-{\mu\circ\id_T}\ar[d]_-{\id_T\circ\mu}&T\circ T\ar[d]^\mu\\T\circ T\ar[r]_\mu&T}\end{align}

We call $T$ the {\em functor part} of $\top$ and we refer to $\eta$ and $\mu$ as the {\em unit map} and the {\em multiplication map} of $\top$, respectively. We sometimes abuse notation and refer to the functor part $T$ as though it were the whole monad.

\end{definition}

\begin{example}\label{ex:monad}

We now go through Definition \ref{def:monad} using the $\List$ monad. The first step is to give a functor $\List\taking\Set\to\Set$. For every set $X$ we must provide a set $\List(X)$ and for every function $f\taking X\to Y$ we must provide a function $\List(f)\taking\List(X)\to\List(Y)$. To clarify the situation, let us lay out two sets $X, Y$ and a function between them.
$$X=\{p,q,r\}\hspace{1in}Y=\{1,2,3,4\}$$
$$\xymatrix@=.5pt{
\hspace{.3in}&X\ar[rr]^f&\hspace{1.4in}&Y\\
\ar@{..}[rrr]+<.25in,0pt>&&&\\\\\\
&p\ar@{|->}[rr]&&1\\
&q\ar@{|->}[rrdd]&&2\\
&r\ar@{|->}[rruu]&&3\\
&&&4}$$
Then $\List(X)$ is the set of all lists in elements of $X$. Thus the set $\List(X)$ includes the empty list $[]$, one element lists such as $[p]$, and all other lists in $X$ (of finite length) such as $[p,q,r,r,p]$. Given our function $f$ as above, we can apply it term-by-term to a list in $X$ and return a list of the same length in $Y$. 
$$\xymatrix@=.5pt{
&\List(X)\ar[rr]^{\List(f)}&\hspace{.7in}&\List(Y)\\
\ar@{..}[rrr]+<.6in,0pt>&&&\\\\\\
&[q, p, p, r, q, q, r, q]\ar@{|->}[rr]&&[4,1,1,1,4, 4,1,4]}$$
Thus we have described the functor $\List$. As a monad, it comes with two natural transformations, a unit map $\eta$ and a multiplication map $\mu$. Given a set $X$, the unit map $\eta_X\taking X\to\List(X)$ returns singleton lists as follows
$$\xymatrix@=.5pt{
&X\ar[rr]^{\eta_X}&\hspace{1.4in}&\List(X)\\
\ar@{..}[rrr]+<.3in,0pt>&&&\\\\\\
&p\ar@{|->}[rr]&&[p]\\
&q\ar@{|->}[rr]&&[q]\\
&r\ar@{|->}[rr]&&[r]}$$
Given a set $X$, the multiplication map $\mu_X\taking\List(\List(X))\to\List(X)$ flattens lists of lists as follows.
$$\xymatrix@=.5pt{
&\List(\List(X))\ar[rr]^{\mu_X}&\hspace{.7in}&\List(X)\\
\ar@{..}[rrr]+<.6in,0pt>&&&\\\\\\
&\big[[q, p, r], [], [q, r, p, r], [r]\big]\ar@{|->}[rr]&&[q, p, r, q, r, p, r, r]}$$
The naturality of $\eta$ and $\mu$ just mean that these maps work appropriately well under term-by-term replacement by a function $f\taking X\to Y$. Finally the three monad axioms (\ref{dia:monad axiom1}), (\ref{dia:monad axiom2}), and (\ref{dia:monad axiom3}) can be exemplified as follows:
$$\xymatrix@=30pt{[p, q, q]\ar[r]^-{\id_\List\circ\eta}\ar@{=}[rd]&\big[[p], [q], [q]\big]\ar[d]^\mu\\&[p, q, q]}\hspace{.8in}
\xymatrix@=30pt{[p, q, q]\ar[r]^-{\eta\circ\id_\List}\ar@{=}[rd]&\big[[p, q, q]\big]\ar[d]^\mu\\&[p,q,q]}$$
\vspace{.1in}
$$\xymatrix@=30pt{\Big[\big[[p, q], [r]\big], \big[[], [r, q, q]\big]\Big]\ar@{|->}[r]^-{\mu\circ\id_\List}\ar@{|->}[d]_{\id_\List\circ\mu}&\big[[p, q], [r], [], [r, q, q]\big]\ar@{|->}[d]^\mu\\\big[[p, q, r], [r, q, q]\big]\ar@{|->}[r]_\mu&[p, q, r, r, q,q]}$$

\end{example}

The $\List$ monad is but one example of a huge variety of monads on $\Set$. Many more examples will be given in Section \ref{sec:examples}. We now go on to define the Kleisli category associated to a monad. The definition may be a bit opaque. We give an example in \ref{ex:kleisli}, but the real motivation comes in Section \ref{sec:Kleisli instances}. Readers who learn best by example might skip directly to Section \ref{sec:examples}.

\begin{definition}\label{def:kleisli category}

Let $\top=(T,\eta,\mu)$ be a monad on $\Set$. The {\em Kleisli category} associated to $\top$, denoted $\Kls{\top}$, is defined as follows. The objects are sets, i.e. $$\Ob(\Kls{\top})=\Ob(\Set).$$ For any sets $X,Y\in\Ob(\Kls{\top})$ we put $$\Hom_{\Kls{\top}}(X,Y)=\Hom_{\Set}(X,T(Y)).$$ Given morphisms $f\taking X\to Y$ and $g\taking Y\to Z$ in $\Kls{\top}$, we must define their composite $g\circ f$. Unwinding definitions, we are given functions 
\begin{align}X\Too{f}T(&Y)\\
&Y\Too{g} T(Z)
\end{align}
in $\Set$, and we need a function $X\to T(Z)$. Let $\ol{g}\taking T(Y)\to T(Z)$ denote the composite function 
\begin{align}
T(Y)\To{T(g)}T(T(Z))\To{\mu}T(Z).
\end{align}
We define the map $g\circ f\taking X\to Z$ in $\Kls{\top}$ to be the composition of $f$ and $\ol{g}$ in $\Set$. Monad axiom (\ref{dia:monad axiom3}) ensures that this composition law is associative, and axioms (\ref{dia:monad axiom1}) and (\ref{dia:monad axiom2}) ensure that the identity, which on $X\in\Ob(\Kls{\top})$ is given by $\eta_X\taking X\to TX$, is a left and right unit. 

For any set $X$ we refer to elements of $T(X)$ as {\em $T$-values in $X$}.

\end{definition}

\begin{example}\label{ex:kleisli}

We continue working with the $\List$ monad from Example \ref{ex:monad}. The objects of the Kleisli category $\Kls{\List}$ are, as always, simply sets. Given sets $X$ and $Y$ (say $X=\{p,q,r\}$ and $Y=\{s,t\}$), a morphism $f\taking X\to Y$ in $\Kls{\List}$ is a function $X\to\List(Y)$. In other words it consists of three lists in letters $s,t$. For example let us say $$f(p)=[s,s] \hsp\hsp f(q)=[] \hsp\hsp f(r)=[t,s,t].$$ To explain the composition law, let us define a new set $Z=\{u,v\}$ and a function $g\taking Y\to\List(Z)$ given by $$g(s)=[u,v,v]\hsp\hsp g(t)=[v,u].$$ Then the composition $g\circ f\taking X\to Z$ in $\Kls{\List}$ corresponds to the obvious substitution: $$g\circ f(p)=[u,v,v,u,v,v]\hspace{.5in} g\circ f(q)=[]\hspace{.5in} g\circ f(r)=[v,u,u,v,v,v,u].$$

\end{example}

\begin{remark}

Given a monad $\top$, its Kleisli category $\Kls{\top}$ is equivalent to the category of free $\top$-algebras (\cite{BW}). However the database representation of maps that seems to be suggested by the Kleisli category is much more compact than that suggested by the category of free algebras. For example, consider the $\List$ monad. If $X$ is a set with three elements and $Y$ is any set, then a function $X\to \List(Y)$ can be represented by a table with three rows. On the other hand, one might imagine that a map $f\taking\List(X)\to\List(Y)$ should be represented by a table with infinitely many rows, one for each element of $\List(X)$. Our point is that the Kleisli representation is valuable because it is as succinct as possible.

\end{remark}

\section{Kleisli instances}\label{sec:Kleisli instances}

Now that we have a categorical viewpoint of databases (Section \ref{sec:CatDB}) and an understanding of the Kleisli category, we can combine them.

\addtocounter{subsection}{1}

\begin{definition}\label{def:kleisli instances}

Let $\mcC$ denote a schema and let $\top:=(T,\eta,\mu)$ denote a monad on $\Set$ having Kleisli category $\Kls{\top}$. A {\em Kleisli $\top$-instance on $\mcC$} (or simply a {\em $\top$-instance on $\mcC$}) is a functor $\delta\taking\mcC\to\Kls{\top}$. 

\end{definition}

\subsection{Representing Kleisli instances}\label{sec:representing Kleisli}

Let us examine Definition \ref{def:kleisli instances} in detail. For the remainder of Section \ref{sec:representing Kleisli}, $\mcC$ will denote a schema, $\top=(T,\eta,\mu)$ will denote a monad on $\Set$, and $\delta\taking\mcC\to\Kls{\top}$ will denote a Kleisli $\top$-instance on $\mcC$. We will first investigate what information is provided by our instance $\delta$ and then explain how to display it in an extension of the typical database format.

Our schema $\mcC$ consists of objects, arrows, and a path equivalence relation. For each object $c\in\Ob(\mcC)$, our instance provides a set $\delta(c)\in\Ob(\Kls{\top})=\Ob(\Set)$. For each morphism $f\taking c\to c'$ in $\mcC$, our instance provides a morphism $\delta(f)\taking\delta(c)\to\delta(c')$ in $\Kls{\top}$; this is the same as a function $$\delta(f)\taking\delta(c)\to T\delta(c').$$ 

A path $c_0\To{f_1}c_1\To{f_2}c_2\To{f_3}\cdots\To{f_n}c_n$ in $\mcC$ is sent to a composition of functions $$\delta(c_0)\To{\delta(f_1)} T\delta(c_1)\To{\delta(f_2)} T^2\delta(c_2)\To{\delta(f_3)}\cdots\To{\delta(f_n)} T^n\delta(c_n)\To{\mu^{n-1}}T\delta(c_n),$$ and the path equivalence relation must be satisfied with respect to such compositions.

To represent an {\em atomic} database instance $\epsilon\taking\mcC\to\Set$, as in Section \ref{sec:CatDB}, we used (and will continue to use) a tabular format in which every object $c\in\Ob(\mcC)$ was displayed as a table including one ID column and an additional column for every arrow emanating from $c$. In the ID column of table $c$ were elements of the set $\epsilon(c)$ and in the column assigned to some arrow $f\taking c\to c'$ the cells were elements of the set $\epsilon(c')$. 

To represent a {\em Kleisli} database instance $\delta\taking\mcC\to\Kls{\top}$ is similar; we again use a tabular format in which every object $c\in\Ob(\mcC)$ is displayed as a table including one ID column and an additional column for every arrow emanating from $c$. In the ID column of table $c$ are again elements of the set $\delta(c)$; however in the column assigned to some arrow $f\taking c\to c'$ are not elements of $\delta(c')$ but $T$-values in $\delta(c')$, i.e. elements of $T\delta(c')$. 

\begin{example}[Lists]\label{ex:lists}

Let $\top=(\List,\eta,\mu)$ be the list monad as described in Examples \ref{ex:monad} and \ref{ex:kleisli}. Here we show how a $\List$-instance could be represented in a tabular fashion. Our imagined scenario is as follows. We have a set $K$ of tasks. Each task $k\in K$ is composed of an ordered sequence of other tasks in the set, $$\xymatrix{k\ar@{|->}[rrr]^-{\text{is composed of}}&&&[k_1,\ldots,k_n]}.$$ Here a task $k$ might be irreducible ($k\mapsto[k]$) or empty of requirements ($k\mapsto[]$). Our situation is modeled by a $\List$-instance on the schema $$\Loop\iso\fbox{\xymatrix{\LTO{Task}\ar@(l,u)[]^{\text{isComposedOf}}}}$$ The following is an example of such:

$$\begin{tabular}{| l || l |}\bhline
\multicolumn{2}{| c |}{\tt{Task}}\\\bhline 
{\bf ID}&{\bf IsComposedOf}\\\bbhline
a & [b, a, b] \\\hline
b & [e, c]\\\hline
c & [d]\\\hline
d & []\\\hline
e& [d]\\\bhline
\end{tabular}
\hspace{1in}
\parbox{1.5in}{\xymatrix{
a\ar@(l,u)[]^{2nd}\ar@/^.5pc/[r]^{1st}\ar@/_.5pc/[r]_{3rd}&b\ar[r]^{2nd}\ar[d]^{1st}&c\ar[r]^{1st}&d\\
&e\ar[urr]_{1st}}}
$$

A $\List$-instance on $\Loop$ can be thought of as a directed graph such that every vertex has finitely many outgoing edges, which are linearly ordered.

\end{example}

\comment{

\subsection{The category $\mcC\kls{\top}$}

\begin{definition}\label{def:kleisli morphisms}

Let $\mcC$ be a schema, let $\top=(T,\eta,\mu)$ be a monad on $\Set$, and let $\delta,\epsilon\taking\mcC\to\Kls{\top}$ be $\top$-instances on $\mcC$. A {\em morphism of $\top$-instances on $\mcC$}, denoted $a\taking\delta\to\epsilon$ consists of a function $a_c\taking\delta(c)\to\epsilon(c)$ for every object $c\in\Ob(\mcC)$, called the {\em $c$-component of $a$}, such that for each $f\taking c\to c'$ in $\mcC$ the induced square 
$$\xymatrix{\delta(c)\ar[r]^{a_c}\ar[d]_{\delta(f)}&\epsilon(c)\ar[d]^{\epsilon(f)}\\T\delta(c')\ar[r]_{Ta_{c'}}&T\epsilon(c')}$$ 
commutes.\longnote{Fix this, make it straight natural transformations (so quivers work right)}

\end{definition}

\begin{remark}\label{rmk:factors through unit}

The morphisms between instances $\delta,\epsilon\taking\mcC\to\Kls{\top}$, as defined in Definition \ref{def:kleisli morphisms}, are equivalent to the natural transformations $a\taking\delta\to\epsilon$ whose $c$-component factors through the unit transformation $\eta_{\epsilon(c)}$ for each $c\in\Ob(\mcC)$. One might instead consider the category that has the same objects as $\mcC\kls{\top}$ but that includes {\em all} natural transformations as morphisms. There are advantages and disadvantages to each definition, and there may be other definitions of morphisms that are even better suited to certain models. The issue is not important for this paper where we are concentrating on the instances; it may be taken up in future work.

\end{remark}

}

\subsection{The categories $\mcC\bkls{\top}$ and $\mcC\kls{\top}$}\label{sec:two kleisli instance categories}

Kleisli instances are interesting objects in their own right, as we will see in Section \ref{sec:examples}; however, any category theorist will be interested in the morphisms between them. It seems that different notions of morphisms are appropriate in different circumstances. Below we define two categories for any schema $\mcC$ and monad $\top$; both have the same set of objects, namely the set of Kleisli $\top$-instances on $\mcC$, but one has more morphisms. In Remark \ref{rmk:weird morphisms}, we will offer still another possibility. Perhaps the point is that there several viable notions of morphisms between Kleisli states and the choice of which to use should be dictated by ones purpose.

\begin{definition}\label{def:general}

Let $\mcC$ be a schema, let $\top=(T,\eta,\mu)$ be a monad on $\Set$, and let $\delta,\epsilon\taking\mcC\to\Kls{\top}$ be $\top$-instances on $\mcC$. A {\em general morphism of $\top$-instances on $\mcC$} is a natural transformation $a\taking\delta\to\epsilon$ of functors. We define the {\em category $\top$-instances on $\mcC$}, denoted $\mcC\bkls{\top}$, to be the category having objects and general morphisms as above.

\end{definition}

Let $\mcC$ be a category and $\top=(T,\eta,\mu)$ a monad on $\Set$. Given two $\top$-instances $\delta,\epsilon\taking\mcC\to\Kls{\top}$, a general morphism $a\taking\delta\to\epsilon$ is simply a natural transformation of functors. Unpacking that definition, we have for every object $c\in\Ob(\mcC)$ a component morphism $a_c\taking\delta(c)\to\epsilon(c)$ in $\Kls{\top}$, which is a function $a_c\taking\delta(c)\to T\epsilon(c)$. These components have to fit into naturality squares: given any $f\taking c\to c'$ in $\mcC$ we need the following diagram to commute: 
\begin{align}\label{dia:natural kleisli}
\xymatrix{\delta(c)\ar[r]^{\delta(f)}\ar[d]_{a_c}&T\delta(c')\ar[d]^{\mu\circ Ta_{c'}}\\T\epsilon(c)\ar[r]_{\mu\circ T\epsilon(f)}&T\epsilon(c').}
\end{align}

We will see in Example \ref{ex:rep theory} that for the classical mathematical subject area of representation theory \cite{EGH}, these so-called general morphisms are precisely what one wants. In other words for the vector-space monad $\Vect$, and a category $G$ (generally either a group or a quiver), the category of general $\Vect$-instances is the category of $G$-representations, $G\bkls{\Vect}\iso\Rep(G)$. 

However for classical computer science, these general morphisms seem to be too general. For example, we will show that there is a monad $\Tur^U_M$ for which the $\Loop$-instances are almost precisely the same thing as Turing machines.\footnote{To be explicit, an object in $\Loop\kls{\Tur^U_M}$ is equivalent to a Turing machine for which the start state has not been specified. We call objects in $\Loop\kls{\Tur^U_M}$ {\em unpointed Turing Machines}.} In this setting, general morphisms seem strange and unmotivated whereas the basic morphisms (Definition \ref{def:basic}) make much more sense. Indeed, given a basic morphism $p\taking\delta\to\delta'$ of unpointed Turing machines and any choice of start state $S\in\delta(s)$, the Turing machine $(\delta,S)$ computes the same function as $(\delta',p(S))$ computes.

\begin{definition}\label{def:basic}

Let $\mcC$ be a schema, let $\top=(T,\eta,\mu)$ be a monad on $\Set$, and let $\delta,\epsilon\taking\mcC\to\Kls{\top}$ be $\top$-instances on $\mcC$. A {\em basic morphism} $b\taking\delta\to\epsilon$ consists of a component function $b_c\taking\delta(c)\to\epsilon(c)$ for each object $c\in\Ob(\mcC)$, such that for each morphism $f\taking c\to c'$ in $\mcC$ the induced diagram of sets commutes:
\begin{align}\label{dia:basic morphism}
\xymatrix{\delta(c)\ar[d]_{b_c}\ar[r]^{\delta(f)}&T\delta(c')\ar[d]^{Tb_{c'}}\\\epsilon(c)\ar[r]_{\epsilon(f)}&T\epsilon(c').}
\end{align}
We denote by $\mcC\kls{\top}\ss\mcC\bkls{\top}$ the {\em basic subcategory of $\top$-instances on $\mcC$} which has as objects all $\top$-instances and which has as morphisms the basic morphisms, as defined above.

\end{definition}

\begin{remark}\label{rmk:basic}

Given $\delta,\epsilon\taking\mcC\to\Kls{\top}$, there is another definition of basicness that is equivalent and perhaps more categorical. Namely a natural transformation $a\taking\delta\to\epsilon$ is basic if and only if, for each object $c\in\Ob(\mcC)$ the component $a_c\taking\delta(c)\to T\epsilon(c)$ factors through the unit component $\eta_{\epsilon(c)}\taking\epsilon(c)\to T\epsilon(c).$

\end{remark}

For any monad $\top=(T,\eta,\mu)$ on $\Set$ and any category $\mcC$, there is a functor from the category of ordinary (atomic) database instances into the basic category of $\top$-instances, $$E^\mcC_\top\taking\mcC\set\to\mcC\kls{\top}.$$ For an object $\delta\taking\mcC\to\Set$ we have $E^\mcC_\top(\delta)=\delta$, and for a morphism $a\taking\delta\to\epsilon$, we have $E^\mcC_\top(a)=\eta_\epsilon\circ a\taking\delta\to T\epsilon$. This functor is a faithful if and only if there exists a set $X$ such that $\top(X)$ has cardinality at least 2. 

\comment{

There is also an inclusion of the category of basic instances into the general category, $$L_\top^\mcC\taking\mcC\kls{\top}\to\mcC\bkls{\top}.$$ The lemma below states that the basic category is almost always different from the general category of Kleisli instances.

\begin{lemma}\label{lemma:basic is different than general}

Suppose that $\mcC$ is a nonempty category and $\top=(T,\eta,\mu)$ a monad on $\Set$. Then $L_\top^\mcC\taking\mcC\kls{\top}\to\mcC\bkls{\top}$ is an equivalence of categories if and only if $\eta\taking\id_\Set\to T$ is an isomorphism of functors $\Set\to\Set$.

\end{lemma}

\begin{proof}

If $\eta$ is an isomorphism then $L_\top^\mcC$ is an equivalence by Remark \ref{rmk:basic}. For the other direction, suppose that $L^\mcC_\top$ is an equivalence of categories and let $c\in\Ob(\mcC)$ be an object considered as a functor $[0]\To{c}\mcC$. Then composition with $c$ yields an isomorphism $\mcC\kls{\top}\iso\Set$ and $\mcC\bkls{\top}\iso\Kls{\top}$. Thus we have a commutative square 
$$\xymatrix{\mcC\kls{\top}\ar[r]^{L^\mcC_\top}\ar[d]_\iso&\mcC\bkls{\top}\ar[d]^\iso\\\Set\ar[r]&\Kls{\top}.}$$
The top functor $L^\mcC_\top$ is assumed to be an equivalence, so the bottom functor is too, and this implies that $\eta$ is a natural isomorphism.

\end{proof}

}
\begin{remark}\label{rmk:weird morphisms}

Another notion of morphisms between $\top$-instances on $\mcC$ is often appropriate. If $\top$ is the extension of a monad $\top'=(T',\eta',\mu')$ on $\Cat$ via the adjunction $\Adjoint{\text{Disc}}{\Set}{\Cat}{\Ob}$, then for any $X\in\Ob(\Set)$ one may say that ``$T(X)$ naturally has the structure of a category" because $T'(\text{Disc}(X))$ is a category and $T(X)=\Ob(T'(X))$. The monads $\mcP$, Multiset, List, and $\Exc_E$ (from Sections \ref{sec:multiset},\ref{sec:list}, and \ref{sec:exceptions}) are instances of this phenomenon. Consider, for example, the monad $\mcP$ where one recognizes that the power set of any set $X$ does naturally come with a partial order. In this case, the monad $\mcP$ on $\Set$ is induced by a monad on $\Cat$ whose functor part is $\mcX\mapsto\Fun(\mcX,[1])$, where $[1]=\fbox{$\LMO{}\too\LMO{}$}$ is the walking arrow category.

When $\top$ extends to a monad on $\Cat$, there seems to be another natural notion of morphisms between $\top$-instances on a schema $\mcC$. Namely, a {\em lax morphism} $a\taking\delta\to\epsilon$ between $\delta,\epsilon\taking\mcC\to\Kls{\top}$ consists of a component function $a_c\taking\delta(c)\to\epsilon(c)$ for each $c\in\Ob(\mcC)$ and, for each $f\taking c\to c'$ in $\mcC$, a natural transformation diagram 
$$
\xymatrix{\delta(c)\ar[r]^{\delta(f)}\ar[d]_{a_c}\ar@{}[dr]|{\Swarrow}&T(\delta(c'))\ar[d]^{a_{c'}}\\
\epsilon(c)\ar[r]_{\epsilon(f)}&T(\epsilon(c'))}
$$ 
in other words, a morphism $a_{c'}\circ\delta(f)\too\epsilon(f)\circ a_c$.

When it is defined, this notion of morphism seems to have some advantages. For example when $T=$Multiset,  we will see in Example \ref{ex:graphs} that the objects of $\Loop\kls{\text{Multiset}}$ are graphs, but the morphisms are more restrictive than graph morphisms. On the other hand, the category with the same objects and lax morphisms is equivalent to the category of graphs.

\end{remark}

\comment{

There is an obvious functor $U\taking\mcC\kls{\top}\to\mcC\bkls{\top}$ given by composing with $\eta$ objectwise on $\mcC$. More precisely, we take $U$ to be the identity on objects and, for any basic morphism $b\taking\delta\to\epsilon$ and object $c\in\Ob(\mcC)$, we take $U(b_c)$ to be the composite
\begin{align}\label{dia:unit factorization}
\xymatrix@=15pt{&\epsilon(c)\ar[rd]^{\eta_{\epsilon(c)}}\\\delta(c)\ar[rr]_{U(b_c)}\ar[ur]^{b_c}&&T\epsilon(c).}
\end{align}
Then $U(b)$ is indeed a natural transformation; indeed, consider the diagram
\begin{align}\label{dia:big thing}
\xymatrix{
\delta(c)\ar[rr]^{\delta(f)}\ar[dd]_{U(b_c)}\ar[dr]_{b_c}&&T\delta(c')\ar'[d]_{\mu\circ TU(b_{c'})}[dd]\ar[dr]^{Tb_{c'}}\\
&\epsilon(c)\ar[dl]^{\eta_{\epsilon(c)}}\ar[rr]_(.3){\epsilon(f)}&&T\epsilon(c')\ar@{=}[dl]^{\mu\circ T\eta_{\epsilon(c')}}\\
T\epsilon(c)\ar[rr]_{\mu\circ T\epsilon(f)}&&T\epsilon(c')
}\end{align}
We will speak of its back square, its top and bottom squares, and its left and right triangles. Its top square commutes because $b$ is a basic morphism; its bottom square commutes by definition of identities and composition in $\Kls{\top}$; its side triangles commute by definition of $U(b)$. Therefore its back square, which is Diagram (\ref{dia:natural kleisli}), commutes; thus $U(b)\taking\delta\to\epsilon$ is indeed a general morphism.

\begin{proposition}

Let $\mcC$ be a schema, let $\top=(T,\eta,\mu)$ be a monad on $\Set$, let $U\taking\mcC\kls{\top}\to\mcC\bkls{\top}$ be as above, and let $\delta,\epsilon\taking\mcC\to\Kls{\top}$ be $\top$-instances on $\mcC$. Then a general morphism $a\taking\delta\to\epsilon$ is in the image of $U$ if and only if it satisfies the following {\em unit factorization condition}: for each $c\in\Ob(\mcC)$ the component $a_c\taking\delta(c)\to T\epsilon(c)$ factors through the unit $\eta_{\epsilon(c)}$. 

\end{proposition}

\begin{proof}

If $a$ is in the image of $U$ then by definition it satisfies the unit factorization condition. Suppose then that $a$ satisfies the unit factorization condition. That is, for each $c\in\Ob(\mcC)$ there exists a function $b_c\taking\delta(c)\to\epsilon(c)$ such that Diagram (\ref{dia:unit factorization}) commutes. To see that these functions fit together into a basic morphism it suffices to show that Diagram (\ref{dia:basic morphism}), which is the same as the top square in Diagram (\ref{dia:big thing}), commutes. This follows by a diagram chase, using the fact that the map $$\mu\circ T\eta_{\epsilon(c')}\taking T\epsilon(c')\to T\epsilon(c')$$ is the identity function and hence injective.

\end{proof}

}

\comment{

\subsection{Atomic instances imbed into Kleisli instances}\label{sec:atomic is initial}

Atomic instances on a schema $\mcC$ involve tables where every entry is an (atomic) row-ID in some other table. Kleisli instances on $\mcC$ involve tables where every entry is a $T$-value, i.e. some kind of generalized row-ID. It is not hard to believe that there is a natural conversion of atomic instances into Kleisli instances; it is given by applying $\eta$ everywhere.

\begin{proposition}\label{prop:atomic imbedding}

Let $\top=(T,\eta,\mu)$ be a monad on $\Set$. There is an adjunction $$\Adjoint{F}{\Set}{\Kls{\top}}{U}$$ In particular, given a functor $\mcC\to\Set$, it can be composed with $F\taking\Set\to\Kls{\top}$, and this process is functorial: there is a functor $$F\circ -\taking\mcC\set\to\mcC\kls{\top}.$$

\end{proposition}

\begin{proof}

The first statement can be found in \cite{}. We can roughly describe the left adjoint $F$ as follows: for any set $X$, one takes $F(X)=X$; for any map $g\taking X\to Y$ one takes $F(g)\taking X\To{g}Y\To{\eta}TY$. We can roughly describe the right adjoint $U$ as follows: for any object $X\in\Ob\;\Kls{\top}$, one takes $U(X)=T(X)$; for any map $g\taking X\to Y$ in $\Kls{\top}$, i.e. function $g\taking X\to T(Y)$, one takes $U(g)\taking T(X)\To{Tg}T^2Y\To{\mu}TY$. The second statement thus follows by Remark \ref{rmk:factors through unit}.

\end{proof}

The upshot is that if one had a ordinary database and wanted to allow the extra flexibility of a Kleisli database, there is a natural way to push the original into the new setting.

}

\section{Examples}\label{sec:examples}

In this section we provide a survey of available monads that may be useful in databases. We divide them into five roughly sensible groups. In Section \ref{sec:universals} we discuss two monads, one of which is initial in the category of monads (and gives rise to ordinary (atomic) database instances) and one of which is terminal in the category of monads (and gives rise to so-called unlinked instances). In Section \ref{sec:collections} we give examples of monads that represent various kinds of collection such as subsets, multisets, lists, and probability distributions. In Section \ref{sec:tunable} we discuss monads which we describe as ``tunable," because one can adjust the choice of monad in a controlled way; for example, for each choice of set $E$ one obtains a different monad $\Exc_E$ of $E$-exceptions. In Section \ref{sec:algebraic} we consider various classical algebraic monads, e.g. of vector spaces. Finally in Section \ref{sec:processes} we consider process-oriented monads that include computations and experiments.

\subsection{Universal monads}\label{sec:universals}

\subsub{Atomic instances}\label{sec:atomic}

There is an identity monad on $\Set$, which we denote $$\text{Atomic}=(\id_\Set,\id,\id).$$ Its instances are here called {\em atomic} (or {\em ordinary}, or {\em ordinary atomic}) instances. See Examples \ref{ex:atomic worksIn} and \ref{ex:loop}.

\subsub{Unlinked entities}\label{sec:unlinked}

Consider the monad $$\text{Unlinked}=(\singleton^-,!,!),$$ where for any set $X\in\Set$, the set $\singleton^X:=\singleton$ is the terminal object in $\Set$. The unit and multiplication maps are completely determined by their domain and codomain. An Unlinked-instance on $\mcC$ includes a set of records for each object $c\in\Ob(\mcC)$, but the foreign keys offer no connection between them.

\begin{example}

Let $\mcW$ be as in Example \ref{ex:atomic worksIn}. An instance $\delta\taking\mcW\to\Kls{\text{Unlinked}}$ might look like
$$\delta:=\left\{\hsp
\tiny\begin{tabular}{| l || l |}\bhline
\multicolumn{2}{| c |}{\tt{Employee}}\\\bhline 
{\bf ID}&{\bf WorksIn}\\\bbhline
Alice & *\\\hline
Bob & *\\\hline
Carl & *\\\hline
Deb & *\\\hline
Fred & *\\\hline
Jen & *\\\bhline
\end{tabular}
\hspace{.5in}
\begin{tabular}{| l ||}\bhline
\multicolumn{1}{| c |}{\tt{Department}}\\\bhline 
{\bf ID}\\\bbhline
Bio\\\hline
EECS\\\hline
Math\\\hline
Music\\\bhline
\end{tabular}
\hsp\right\}
$$

\end{example}

\subsection{Collection monads}\label{sec:collections}

\subsub{Subsets}\label{sec:subsets}

The monad $$\mcP=(\PP,\{-\},\cup)$$ sends a set to its power set; the unit $\{-\}\taking X\to \PP(X)$ is given by singleton subsets, $x\mapsto\{x\}$, and the multiplication is given by union $\cup\taking\PP(\PP(X))\to\PP(X)$. Note that there is an isomorphism of categories $\Kls{\mcP}\iso\Rel$, where $\Rel$ is the category of sets and binary relations \cite{FS}. 

\begin{example}

$$\delta:=\left\{\hsp
\small\begin{tabular}{| l || l |}\bhline
\multicolumn{2}{| c |}{\tt{Employee}}\\\bhline 
{\bf ID}&{\bf WorksIn}\\\bbhline
Alice & \{Math, EECS\}\\\hline
Bob & \{EECS\}\\\hline
Carl & \{\}\\\hline
Deb & \{EECS\}\\\hline
Fred & \{Math, Bio\}\\\hline
Jen & \{Bio\}\\\bhline
\end{tabular}
\hspace{.5in}
\begin{tabular}{| l ||}\bhline
\multicolumn{1}{| c |}{\tt{Department}}\\\bhline 
{\bf ID}\\\bbhline
Bio\\\hline
EECS\\\hline
Math\\\hline
Music\\\bhline
\end{tabular}
\hsp\right\}
$$

\end{example}

\begin{example}[Nonempty subsets]\label{ex:nonempty subsets}

It is easy to see that the nonempty subsets functor $\PP_+$ given by $\PP_+(X)=\{Y\ss X\|Y\neq\emptyset\}$ is the functor part of a monad. Note that $\Kls{\PP_+}$ is the set of correspondences in the sense of theoretical economics, e.g. for ``best response" strategies in game theory \cite[Section 2.1.5]{Car}.

\end{example}

\begin{example}[Turning a database inside out]

Given a category $\mcC$ and an ordinary database instance $\delta\taking\mcC\to\Set$ we can, in a sense, invert $\delta$ by producing an instance on $\mcC\op$: $$\back{\delta}\taking\mcC\op\to\Kls{\mcP}.$$ For $c\in\Ob(\mcC)$ we have $\back{\delta}(c)=\delta(c)$. For $f\taking c'\to c$ in $\mcC$ and $x\in\delta(c)$, we define $\back{\delta}(f)\taking\delta(c)\to\PP(\delta(c'))$ by $$\back{\delta}(f)(x)= \delta(f)^\m1(x)\ss\delta(c').$$

\end{example}

\subsub{Lists}\label{sec:list}

This was the running example in Section \ref{sec:monads and Kleisli} and Section \ref{sec:Kleisli instances}; see in particular Example \ref{ex:lists}.

One can also define a {\em non-empty lists} monad $\List_+$, similarly to the nonempty subsets monad $\PP_+$ of Example \ref{ex:nonempty subsets}.

\subsub{Finite multisets}\label{sec:multiset}

The monad $\text{Multiset}=(T,\eta,\mu)$ is given as follows. The functor part $T\taking\Set\to\Set$ is given by $$T(X):=\coprod_{n\in\NN}(X^n/\Sigma_n)$$ where $\Sigma_n$ is the symmetric group on $n$ letters, which acts on elements $x\in X^n$ by permuting the order of entries in $x$; the quotient of this action is the set of un-ordered $n$-tuples in $X$. The unit and multiplication in the Multiset monad are analogous to the unit and multiplication in the List monad.

The category $\Kls{\text{Multiset}}$ is equivalent to the category of sets and correspondences. Recall \cite[Section 2.3.1]{Lur} that for sets $X$ and $Y$, a correspondence between $X$ and $Y$ is a diagram of the form $X\From{f}C\To{g}Y$, where $C$ is a set and $f,g$ are functions. 

\begin{example}\label{ex:graphs}

Given a set $X$, a correspondence from $X$ to itself is a graph with vertex set $X$. In other words, a graph is precisely a Multiset-instances on the category $\Loop$. We denote a multiset using usual set notation, but in which duplicate entries with the same name correspond to distinct elements of the multiset.

\begin{align}
\delta:=
\begin{tabular}{| l || l |}\bhline
\multicolumn{2}{| c |}{\tt{s}}\\\bhline 
{\bf ID}&{\bf f}\\\bbhline
a & \{b,d\}\\\hline
b & \{c,c\}\\\hline
c & \{\}\\\hline
d & \{\}\\\hline
e & \{e\}\\\bhline
\end{tabular}
\hspace{1in}
\parbox{.1in}{
\xymatrix{\LMO{a}\ar[r]\ar[d]&\LMO{b}\ar@/^1pc/[r]\ar@/_1pc/[r]&\LMO{c}\\\LMO{d}&\LMO{e}\ar@(u,r)[]}
}
\end{align}

The usual notion of graph morphism is captured by the lax notion given in Remark \ref{rmk:weird morphisms}.

\end{example}

\subsub{Distributions}\label{sec:distributions}

Let $[0,1]\ss\RR$ denote the set of real numbers between $0$ and $1$. Let $X$ be a set and $p\taking X\to[0,1]$ a function. We say that $p$ is a {\em finitary probability distribution on $X$} if there exists a finite subset $W\ss X$ such that 
\begin{align}\label{dia:sum to 1}
\sum_{w\in W}p(w)=1,
\end{align} and such that for all $x'\in X-W$ in the complement of $W$ we have $p(x')=0.$ Note that $W$ is unique if it exists; we call it {\em the support of $p$} and denote it $\Supp(p)$. Note also that if $X$ is a finite set then every function $p$ satisfying (\ref{dia:sum to 1}) is a finitary probability distribution on $X$.

For any set $X$, let $\Dist(X)$ denote the set of finitary probability distributions on $X$. It is easy to check that given a function $f\taking X\to Y$ one obtains a function $\Dist(f)\taking\Dist(X)\to\Dist(Y)$ by $\Dist(f)(y)=\sum_{f(x)=y}p(x)$. Thus we can consider $\Dist\taking\Set\to\Set$ as a functor, and in fact the functor part of a monad. Its unit $\eta\taking X\to\Dist(X)$ is given by the Kronecker delta function $x\mapsto \delta_x$ where $\delta_x(x)=1$ and $\delta_x(x')=0$ for $x'\neq x$. Its multiplication $\mu\taking\Dist(\Dist(X))\to\Dist(X)$ is given by weighted sum: given a finitary probability distribution $w\taking\Dist(X)\to[0,1]$ and $x\in X$, put $\mu(w)(x)=\sum_{p\in\Supp(w)}w(p)p(x).$ 

\begin{example}[Markov chains]\label{ex:markov}

Let $\Loop$ be as in Example \ref{ex:loop}. A $\Dist$-instance on $\Loop$ is equivalent to a time-homogeneous Markov chain. To be explicit, a functor $\delta\taking\Loop\to\Kls{\Dist}$ assigns to the unique object $s\in\Ob(\Loop)$ a set $S=\delta(s)$, which we call the state space, and to $f\taking s\to s$ a function $\delta(f)\taking S\to\Dist(S)$, which sends each element $x\in S$ to some probability distribution on elements of $S$. For example, the table $\delta$ on the left corresponds to the Markov matrix $M$ on the right below:
\begin{align}
\delta:=
\begin{tabular}{| l || l |}\bhline
\multicolumn{2}{| c |}{\tt{s}}\\\bhline 
{\bf ID}&{\bf f}\\\bbhline
1 & .5(1)+.5(2)\\\hline
2 & 1(2)\\\hline
3 & .7(1)+.3(3)\\\hline
4 & .4(1)+.3(2)+.3(4)\\\bhline
\end{tabular}
\hspace{.5in}
M:=\left(
\begin{array}{cccc}
0.5 & 0.5 & 0 & 0\\
0 & 1 & 0 & 0\\
0.7 & 0 & 0.3 & 0\\
0.4 & 0.3 & 0 &0.3
\end{array}
\right)
\end{align}

As one might hope, for any natural number $n\in\NN$ the map $f^n\taking S\to\Dist(S)$ corresponds to the matrix $M^n$, which sends an element in $S$ to its probable location after $n$ iterations of the transition map.

One could also at least encode the information necessary to describe time-inhomogeneous Markov chains by using similar schemas, such as $$\Loop\vee\Loop:=\fbox{\xymatrix{\LMO{s}\ar@(r,u)[]_{f_2}\ar@(l,d)[]_{f_1}}}$$ or $\Loop^{\vee\{1,2,\ldots\}}=\Loop\vee\Loop\vee\cdots$, and the same monad $\Dist$.

\end{example}

\subsection{Tunable monads}\label{sec:tunable}

Some monads on $\Set$ come in families. To make this precise, we will say that a {\em tunable monad} is a pair ($\mcI,P)$ where $\mcI$ is a small category and $P\taking\mcI\to\Monad_\Set$ is a functor (see Definition \ref{def:category of monads}). Of course then any monad can be trivially considered tunable by taking the indexing category to be $\mcI=\fbox{$\bullet$}$ (the discrete category on one object). It is clear that the degree to which a monad is tunable is measured by the complexity of $(\mcI,P)$. In the present section we will only mention three such monads; the first is indexed by $\Fin$, the category of finite sets, the second is indexed by $\Fin\op$, and the third is indexed by the category of monoids. See Section \ref{sec:transformations} for more on the value of tunable monads.

\subsub{Exceptions}\label{sec:exceptions}

Let $E\in\Fin$ be a finite set. The monad $\Exc_E=(-\amalg E,\eta,\mu)$ is given as follows. The unit $\eta_X\taking X\to X\amalg E$ is given simply by the inclusion. The multiplication $\mu_X\taking X\amalg E\amalg E\to X\amalg E$ is given in the obvious way, by identity on each copy of $X$ and $E$.

\begin{example}\label{ex:atomic and maybe}

If $E=\emptyset$ the exception monad reduces to the identity monad, i.e. $\Exc_{\emptyset}$-instances are ordinary atomic instances. If $E=\singleton$ is the one-element set, then $\Exc_\singleton$-instances correspond to databases in which a field can have null values; we call $\Exc_\singleton$ the {\em Maybe monad}.

\end{example}

\begin{example}

Let $\Loop:=\LoopSchema$. Then for any set $E$, the $\Exc_E$-instances on $\Loop$ can encode recursive functions with output values in $E$. For example, with $E=\NN$ we obtain the factorial function $n\mapsto n!$ as an $\Exc_E$-instance, $\delta\taking\Loop\to\Kls{\Exc_\NN}$. Namely, we put $\delta(s):=\NN\cross\NN$ and we put $\delta(f)\taking\delta(s)\to\delta(s)\amalg E$ on $(m,n)\in\delta(s)$ by 
$$\delta(m,n):=
\begin{cases}
(mn,n-1)\in\delta(s)& \text{ if } n\geq 1\\
m\in E& \text{ if } n=0.
\end{cases}
$$
Then for any $n\in\NN$, the factorial of $n$ is obtained by starting with $(1,n)$ and repeatedly applying $\delta(f)$ until an output (in $E=\NN$) is returned. 

\end{example}

\begin{example}[Database schemas]

In Section \ref{sec:CatDB} we gave a definition of database schemas, but we did not mention data types. One model for typed database schemas can be found in \cite[Section 5.1]{Sp1}, but here we present another model based on monads. 

Let $E$ be a set, the elements of which are names of datatypes, e.g. $E=\{\text{String, Int, Float}\}$. Let $\List_E=(T_E,\eta,\mu)$ be the monad with functor part $T_E(X)=\List(X\amalg E)$, sending a set $X$ to the set of lists for which each entry is an element either of $X$ or of $E$. Then we construe any instance $\delta\in\Loop\bkls{\List_E}$ as a database schema in the following way. The set $\delta(s)$ serves as the set of tables, and for each table $x\in\delta(s)$ the list $\delta(f)(x)$ serves as the set of columns of $x$, each of which is either another table (indicating a foreign key) or a datatype. 

\end{example}

\subsub{Inputs}\label{sec:input}

Let $U\in\Fin$ be a finite set. The monad $\Inp^U=(X\mapsto X^U,\eta,\mu)$ is given as follows. The unit $\eta_X\taking X\to X^U$ sends $x$ to the constant function at $x$. If $\Delta_U\taking U\to U\cross U$ is the diagonal map, then we can describe the multiplication $\mu_X\taking (X^U)^U\to X^U$ by $$(X^U)^U\iso X^{U\cross U}\To{\;\Delta_U}X^U.$$

\begin{example}[Tailored user experience]

If $U$ is a set of users then the database instance $\delta\taking\mcC\to\Kls{\Inp^U}$ would provide possibly different values for different $u\in U$. 

Similarly, if $U$ is the set of dates, then the values in a database instance $\delta$ could be made to depend on the date.

\end{example}

\begin{example}[Each universal monad as a special case]

With $U=\emptyset$ the input monad $\Inp^\emptyset$ reduces to the Unlinked monad of Section \ref{sec:unlinked}. If $U=\singleton$, the input monad $\Inp^{\singleton}$ reduces to the identity monad, whose Kleisli instances are ordinary atomic instances as in Section \ref{sec:atomic}.

\end{example}

\begin{example}[Finite state automata]\label{ex:fsa}

A finite state automaton consists of a set $S$ of states, a set $T$ of transitions, and a function $T\cross S\to S$. By currying, this can be rewritten as a function $S\to S^T$. The category of finite state automata with transitions $T$ is precisely the category $\Loop\kls{\Inp^T}$ of $T$-Input instances on $\Loop$.

\end{example}

\subsub{Monoid annotation}\label{sec:monoid}

Let $(M,1,\star)$ be a monoid. We define the monad $$M\text{-annotated}:=(X\mapsto M\cross X, (1,-), (\star,-)).$$ One way to think about this is that $M$ is a language of instructions, multiplication corresponding to carrying out a sequence of instructions and unit corresponding to doing nothing, and $M$-annotated instances keep track of such instructions as one follows foreign keys through the database. However, there are other ways to think about $M$-annotated instances as well, as we show in two examples.

\begin{example}[Assurance]\label{ex:assurance}

Consider the monoid $M=([0,1],1,\ast)$, where $[0,1]\ss\RR$ is the unit interval, and $\ast\taking[0,1]\cross[0,1]\to[0,1]$ is given by multiplication of real numbers. Think of a $[0,1]$-annotated values as assurances. In other words if a data entry clerk or a scientist is less than 100\% sure that a certain datum is correct, they can annotate it with their assurance level. To keep things uncluttered we simply do not write our assurance value if it is unit (100\%).

$$\delta:=
\small\begin{tabular}{| l || l |}\bhline
\multicolumn{2}{| c |}{\tt{Person}}\\\bhline 
{\bf ID}&{\bf LivesAt}\\\bbhline
Alice & 15 Ashville Rd.\\\hline
Bob & 34 Vine St. (80\%)\\\hline
Carl & 21 Post St. (90\%)\\\hline
Deb & 110 W. 5th Ave.\\\bhline
\end{tabular}
$$

The monad multiplication assures that probability values will propagate through the database (with an independence assumption) as we compose foreign keys.

\end{example}

\begin{example}[Time-delay]\label{ex:time delay}

Consider the monoid $M=(\RR_{\geq 0},0,+)$, where $\RR_{\geq 0}\ss\RR$ is the set of non-negative real numbers. Think of $\RR_{\geq 0}$-annotated values as time-delays. In other words, each foreign key $f\taking c\to d$ in a database may correspond to a process that converts things of type $c$ into things of type $d$, and the time delay monad allows us to also encode how long that process is expected to take. Monad multiplication assures that these values will be added together as we string together longer processes by composing foreign keys. Note that we could use $\RR_{\geq 0}\cup\{\infty\}$ instead of $\RR_{\geq 0}$ if we wanted to allow for never-ending processes.

\end{example}

\subsub{Turing Machines}\label{sec:turing machines}

Let $M$ be a monoid and $U$ a finite set. We have seen that on $\Loop$, the monad $\Inp^U$ encodes finite state automata (Example \ref{ex:fsa}), the monad $M\text{-annotated}$ encodes finite lists of instructions (Section \ref{sec:monoid}), and the monad $\Exc_E$ encodes exceptions (Section \ref{sec:exceptions}). Let us set $E=\{\Halt\}$. We can combine these three monads into a new monad: $$\Tur_M^U=\Big(X\mapsto \big(M\cross (X\amalg\Halt)\big)^U,\mu,\eta\Big)$$ We do not describe the unit and multiplication here, but they are easy enough to reconstruct in analogy with the descriptions in Sections \ref{sec:exceptions}, \ref{sec:input}, and \ref{sec:monoid}, assuming $M$ acts trivially on \{Halt\}.

\begin{example}\label{ex:turing}

Consider the case where $U=\{0,1\}$ and where $M$ is the free monoid on the set $\{L,R,W_0,W_1\}$, which we think of as the set of all sequences of instructions to move left, move right, write a 0, and write a 1. Then if $X$ is thought of as the set of states of a Turing machine, a function $U\to M\cross (X\amalg\Halt)$ reads the input and produces an instruction and a new state (possibly the Halt state). 

$$\xymatrix{\fbox{Start}\ar@<.5ex>[r]^{0:W_1}\ar@<-.5ex>[r]_{1:W_1}&\fbox{q0}\ar@(r,u)[]_{1:L}\ar[r]_{0:W_1}&\fbox{q1}\ar@/_2pc/[rr]_{0:W_0}\ar[r]^{1:L}&\fbox{q2}\ar@(r,u)[]_{1:W_0}\ar[r]_{0:R}&\fbox{Halt}\ar@(r,u)[]_{1:W_1}\ar@(r,d)[]^{0:W_0}}
$$
\\
$$
\Loop:=\LoopSchema\hspace{.8in}\small
\delta:=\begin{tabular}{| l || l |}\bhline
\multicolumn{2}{| c |}{\tt{s}}\\\bhline 
{\bf ID}&{\bf f}\\\bbhline
Start & 0: ($W_1$, q0), 1: ($W_1$, q0)\\\hline
q0 & 0: ($W_1$, q1), 1: ($L$, q0)\\\hline
q1 & 0: ($W_0$, Halt), 1: ($L$, q2)\\\hline
q2 & 0: ($R$, Halt), 1: ($W_0$, q2)\\\bhline
\end{tabular}
$$

A functor $\delta\taking\Loop\to\Kls{\Tur^U_M}$ consists of a set $X=\delta(s)$ of states and a function $\delta(f)\taking X\to (M\cross (X\amalg\Halt)^U$, which can be curried to $X\cross U\to M\cross (X\amalg\Halt)$. After we choose a start state, we find ourselves with precisely the specification of Turing machines given in \cite{BJ}.

\end{example}

Tangentially, one may wonder how to evaluate such a Turing machine. Let $\Tape$ denote the set of positioned tapes, i.e. pairs $(T,p)$ where $T\taking\ZZ\to\{0,1\}$ is a function and $p\in\ZZ$. There is an evaluation function $e\taking\Tape\to U$ given by $e(T,p):=T(p)$. By construction we have an action $\alpha\taking M\cross\Tape\to\Tape$. We have a natural transformation $E\taking\Tur^U_M(-)\to(\Tape\cross(-\amalg\Halt))^\Tape$, given on $X$ by 
\begin{align}
\nonumber (M\cross (X\amalg\Halt))^U&\Too{e} (M\cross (X\amalg\Halt))^{\Tape}\\\label{dia:turing spec to impl}
&\To{\id_\Tape}(M\cross\Tape\cross (X\amalg\Halt))^{\Tape}\\
\nonumber&\Too{\alpha}(\Tape\cross (X\amalg\Halt))^{\Tape}.
\end{align}

Choose a turing machine $\delta\taking\Loop\to\Kls{\Tur^U_M}$ with start state $S\in X$ and let $I\in\Tape$ be the initialized tape. Then for each $n\in\NN$ we have $\big(E\circ\delta(f^n)\big)(S)(I)\in\Tape\cross(X\amalg\Halt)$, and we proceed with increasing values of $n$ until the function returns the halt state, at which point we output the tape. 

In Section \ref{sec:morphisms of monads} we will discuss morphisms of monads. We caution the reader that while $\Tur^U_M$ and $X\mapsto (\Tape\cross (X\amalg\Halt))^\Tape$ are both monads, the mapping $E$ in (\ref{dia:turing spec to impl}) is {\em not} a morphism of monads. It is a natural transformation of functors that preserves the unit but not the multiplication. This failure is somehow expected: if there were a morphism of monads from a Turing machine's specification to its implementation, the behavior of programs would be more easily analyzed than it turns out to be.

\subsection{Algebraic monads}\label{sec:algebraic}

\subsub{Vector spaces}

Let $k$ be a field. There is a $k$-vector-space monad sending a set $X$ to the free $k$-vector space with basis $X$. The unit map corresponds to the inclusion of basis vectors and the multiplication map corresponds to the ability to convert a linear combination of vectors into a single vector.

\comment{
\begin{example}[Euler's method]

Let $\Loop$ be the loop category shown in Example \ref{ex:loop}. $\Vect$-instances on $\Loop$ can be understood as encoding Euler's method calculations, as applied to homogeneous ordinary differential equations. For example applying Euler's method with $\Delta t=0.1$ to the differential equation on the left yields the equation on the right, which is encoded in the $\Loop\kls{\Vect}$ table below:
$$\left(
\begin{array}{c}
\frac{d x}{d t}\\\\\frac{d y}{d t}
\end{array}
\right)
=
\left(
\begin{array}{cc}
3x&4y\\\\-2x&y
\end{array}
\right)
\hspace{.5in}
\left(
\begin{array}{c}
x_{n+1}\\y_{n+1}
\end{array}
\right)
=
\left(
\begin{array}{cc}
1.3x_n&0.4y_n\\-0.2x_n&1.1y_n
\end{array}
\right)
$$ \\
$$
\delta:=\begin{tabular}{| l || l |}\bhline
\multicolumn{2}{| c |}{\tt{s}}\\\bhline 
{\bf ID}&{\bf f}\\\bbhline
x & 1.3x + 0.4y\\\hline
y & -0.2x + 1.1y\\\bhline
\end{tabular}
$$

After $n$ steps, Euler's method outputs the same values as the composite $\delta(f^n)$ does.

\end{example}

}

\begin{example}[Representation theory]\label{ex:rep theory}

If $G$ is a group (considered as a category with one object) then $G\bkls{\Vect_k}$ is equivalent to $\Rep_k(G)$, the category of $G$-representations. If $Q$ is a free category then $Q\bkls{\Vect_k}$ is equivalent to $\Rep_k(Q)$, the category of quiver representations on $Q$ (see \cite{Kac}). In particular, Jordan Canonical Form is the classification of isomorphism classes in $\Loop\bkls{\Vect_\CC}$.

\end{example}

\subsub{Others}

There are many algebraic theories---monoids, commutative monoids, groups, abelian groups, rings, commutative rings, etc., to name a few. In fact, some authors \cite{Le2} {\em define} algebraic theories simply as monads on $\Set$. Each monad on $\Set$ has an associated Kleisli category. In fact, two such monads have already been mentioned above under different names. The $\List$-monad from Example \ref{ex:lists} is another name for the monoid monad, and the Multiset monad from Section \ref{sec:multiset} is another name for the commutative monoid monad. 

\begin{example}[Multigraphs]\label{ex:multigraphs}

A {\em multigraph} (see \cite{HMP}) consists of a set of nodes and a set of {\em multi-arrows}, each of which points from one node to a finite list of nodes. A symmetric multigraph is almost the same except each multi-arrow points from one node to a finite set of nodes. 

Let $\NN[-]\taking\Set\to\Set$ denote the (functor part of the) free commutative rig monad \cite{Gol} (respectively let $\NN\langle-\rangle\taking\Set\to\Set$ denote the free rig monad). For example $\NN[x,y]$ is the set of polynomials in $x,y$ with natural number coefficients, containing elements like $3$ and $xy^2+2x^3+y$. (Similarly $\NN\langle x,y\rangle$ would contain elements like $xyy+yxy$.) The set of $\Kls{\NN[-]}$-instances (resp. the set of $\Kls{\NN\langle-\rangle}$-instances) on $\Loop$ is precisely the set of symmetric multigraphs (resp. multigraphs). With morphisms as in Remark \ref{rmk:weird morphisms}, the category of $\NN\langle-\rangle$-instances on $\Loop$ is equivalent to the category of multigraphs.


\end{example}

\subsection{Process monads}\label{sec:processes}

The examples in this section are a bit more far-flung, but still may be useful to give an idea of what is possible.

\begin{example}[Computation]\label{ex:computation}

Fix a programming language $L$. For any set $X$, let $T(X)$ denote the set of programs that are written in $L$ and such that, taking no input, will either halt and return a value in $X$ or not halt. This is the functor part of a monad. By currying, a Kleisli map $X\to T(Y)$ is equivalent to a (possibly non-halting) computation taking input in $X$ and returning values in $Y$.

\end{example}

\begin{example}[Experiment]\label{ex:experiment}

For any set $X$, let $T(X)$ denote the set of specifications for experiments that could be carried out from a fixed initial condition and that will result in a value in $X$. For example, if $X=\ZZ$ is the set of integers, then $T(X)$ might include as an element the phrase ``survey 100 customers at the McDonalds on 32nd street, asking their favorite real number. Take their average as a real number and then apply the floor function to obtain an integer". Then $T$ can be construed as the functor part of a monad. A Kleisli map $X\to TY$ is equivalent to the specification of an experiment that takes parameters in $X$ and outputs a value in $Y$. The point is that the database does not hold the {\em results} of these experiments, but instead the {\em experiment specifications} which, if performed, will result in values later. Any value counts as a (trivial) experiment, so this generalizes ordinary databases.

\end{example}

\section{Transformations}\label{sec:transformations}

Monads, like everything in category theory, are not stand-alone objects but exist in a category, in which the morphisms are an integral part of the picture. In section \ref{sec:morphisms of monads} we will define morphisms of monads. These include operations like transforming a list into a multiset (by forgetting order) or transforming a probability distribution into a subset (by taking all elements that have nonzero probability). A morphism $f\taking \top\to\top'$ of monads results in a functor between the corresponding Kleisli categories. For any schema $\mcC$ we have a commutative square 
$$
\xymatrix{\mcC\kls{\top}\ar[r]^f\ar@{^(->}[d]&\mcC\kls{\top'}\ar@{^(->}[d]\\
\mcC\bkls{\top}\ar[r]_f&\mcC\bkls{\top'}}
$$ 
that converts $\top$-instances into $\top'$-instances in either sense given below (see Definitions \ref{def:general} and \ref{def:basic}). In Section \ref{sec:examples of transformations} we will sketch some examples. 

\subsection{Morphisms of monads}\label{sec:morphisms of monads}

\begin{definition}\label{def:morphisms of monads}

Let $\top=(T,\eta,\mu)$ and $\top'=(T',\eta',\mu')$ be monads on $\mcS$. A {\em morphism of monads from $\top$ to $\top'$} is a natural transformation $\alpha\taking T\to U$ such that the following diagrams commute:
$$
\parbox{.1in}{\xymatrix{\id_\mcS\ar[r]^{\eta}\ar[rd]_{\eta'}&T\ar[d]^F\\&T'}}
\hspace{.5in}\text{and}\hspace{.5in}
\parbox{.1in}{\xymatrix{T^2\ar[r]^\mu\ar[d]_{F^2}&T\ar[d]^F\\(T')^2\ar[r]_{\mu'}&T'.}}
$$

\end{definition}

\begin{remark}\label{rmk:upshot}

An important upshot of Proposition \ref{prop:Kls a functor} is the following. For any category $\mcC$ and morphism of monads $f\taking\top\to\top'$ we have a functor $\mcC\bkls{f}\taking\mcC\bkls{\top}\to\mcC\bkls{\top'}$. Thus any $\top$-instance on $\mcC$ can be transformed via $f$ into a $\top'$-instance on $\mcC$.

\end{remark}

\begin{definition}\label{def:category of monads}

A monad $\top=(T,\eta,\mu)$ is called {\em finitary} if the functor $T$ is determined by its values on finite sets, in the following sense. Let $X\in\Ob(\Set)$ be a set and let $\Fin{/X}$ denote the category whose objects are finite subsets of $X$ and whose morphisms are functions over $X$. Note that for each object $f\taking Y\to X$ in $\Fin_{/X}$ there is an induced map $TY\To{f}TX$, so we obtain a map $$M_X\taking\colim_{\parbox{.6in}{\tiny\begin{center}$f\taking Y\to X$\end{center}\begin{center}$\in\Fin_{/X}$\end{center}}}(TY)\To{\hsp} TX$$ Then $\top$ is finitary if the map denoted $M_X$ is a bijection for every $X\in\Ob(\Set)$.

The {\em category of finitary monads on $\Set$}, denoted $\Monad_\Set$, has finitary monads as objects and morphisms of monads (as in Definition \ref{def:morphisms of monads}) as morphisms. 

\end{definition}

\begin{proposition}\label{prop:Kls a functor}

A morphism of finitary monads induces a functor between their Kleisli categories. In other words there a functor $\Kls{-}\taking\Monad_\Set\to\Cat$.

\end{proposition}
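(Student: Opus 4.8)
The plan is to define $\Kls{-}$ explicitly on objects and on morphisms, and then to verify the two levels of functoriality this requires: first that each morphism of monads $\alpha\taking\top\to\top'$ induces an honest functor $\Kls{\alpha}\taking\Kls{\top}\to\Kls{\top'}$, and second that the assignment $\alpha\mapsto\Kls{\alpha}$ respects identities and composition of monad morphisms. On objects, $\Kls{-}$ is already specified by Definition \ref{def:kleisli category}. Given a morphism of monads $\alpha\taking\top\to\top'$, i.e.\ a natural transformation $\alpha\taking T\to T'$ commuting with the units and multiplications as in Definition \ref{def:morphisms of monads}, I would define $\Kls{\alpha}$ to be the identity on objects (both Kleisli categories have object set $\Ob(\Set)$) and, on a morphism $f\taking X\to T(Y)$ of $\Kls{\top}$, to return the composite $\alpha_Y\circ f\taking X\to T'(Y)$, viewed as a morphism $X\to Y$ in $\Kls{\top'}$.

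First I would check that $\Kls{\alpha}$ preserves identities. The identity of $X$ in $\Kls{\top}$ is $\eta_X$, and $\Kls{\alpha}(\eta_X)=\alpha_X\circ\eta_X=\eta'_X$ by the unit triangle of Definition \ref{def:morphisms of monads}; this is exactly the identity of $X$ in $\Kls{\top'}$. The heart of the argument, and the step I expect to be the main obstacle, is preservation of composition. Given $f\taking X\to T(Y)$ and $g\taking Y\to T(Z)$ in $\Kls{\top}$, their Kleisli composite is $\mu_Z\circ T(g)\circ f$, so I must show
$$\alpha_Z\circ\mu_Z\circ T(g)\circ f=\mu'_Z\circ T'(\alpha_Z\circ g)\circ\alpha_Y\circ f.$$
Expanding $T'(\alpha_Z\circ g)=T'(\alpha_Z)\circ T'(g)$ and applying naturality of $\alpha$ at $g$, namely $T'(g)\circ\alpha_Y=\alpha_{T(Z)}\circ T(g)$, the right-hand side becomes $\mu'_Z\circ T'(\alpha_Z)\circ\alpha_{T(Z)}\circ T(g)\circ f$. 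It therefore suffices to establish the component identity
$$\alpha_Z\circ\mu_Z=\mu'_Z\circ T'(\alpha_Z)\circ\alpha_{T(Z)}.$$

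This is precisely where the monad-morphism axioms do all the work. The right-hand side is the $Z$-component of $\mu'\circ(\alpha\star\alpha)$, where $\alpha\star\alpha\taking T^2\to(T')^2$ is the horizontal composite whose $Z$-component is $T'(\alpha_Z)\circ\alpha_{T(Z)}$; so the displayed equation is exactly the multiplication square of Definition \ref{def:morphisms of monads} evaluated at $Z$. In short, naturality of $\alpha$ rewrites one leg of the composite as a component of $\alpha\star\alpha$, and the multiplication compatibility then closes the square. Finally I would verify that $\Kls{-}$ is itself functorial: the identity monad morphism $\id_T$ gives $\Kls{\id_T}(f)=(\id_T)_Y\circ f=f$, hence $\Kls{\id_T}=\id_{\Kls{\top}}$; and for composable $\alpha\taking\top\to\top'$ and $\beta\taking\top'\to\top''$ the vertical composite has components $\beta_Y\circ\alpha_Y$, so $\Kls{\beta\alpha}(f)=\beta_Y\circ\alpha_Y\circ f=\Kls{\beta}(\Kls{\alpha}(f))$, giving $\Kls{\beta\alpha}=\Kls{\beta}\circ\Kls{\alpha}$. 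I note in passing that the finitary hypothesis built into $\Monad_\Set$ is never used in this argument: the same construction yields a functor $\Kls{-}$ on the category of all monads on $\Set$, and one simply restricts it to the finitary ones.
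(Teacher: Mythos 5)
Your proof is correct: the paper's own ``proof'' of this proposition is the single line ``This is straightforward,'' and your argument---postcomposition with $\alpha_Y$ on Kleisli morphisms, the unit triangle for identities, naturality of $\alpha$ plus the multiplication square for composition, and the easy check that $\alpha\mapsto\Kls{\alpha}$ respects identities and composites---is exactly the standard verification being alluded to. Your closing remark is also accurate: finitariness plays no role in the construction, and the restriction to $\Monad_\Set$ is only the paper's choice of ambient category.
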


\begin{proof}

This is straightforward.

\end{proof}

\subsection{Examples of transformations}\label{sec:examples of transformations}

In this section we write down several simple examples of morphisms of monads. In a few of these we are explicit, but we quickly move to a more colloquial style, assuming that any reader with sufficient interest and background can fill in the details for him or herself.

\subsub{Universals}

\begin{example}

The initial object in $\Monad_\Set$ is $\id_\Set$, whose instances are ordinary atomic database instances. Given any monad $\top$, there is a unique morphism of monads $\id_\Set\to\top$. As in Remark \ref{rmk:upshot}, there is a unique formula to convert any atomic instance into a $\top$-instance. 

\end{example}

\begin{example}

The terminal object in $\Monad_\Set$ is the Unlinked monad from Section \ref{sec:unlinked}. Given any monad $\top$, there is a unique morphism $\top\to\text{Unlinked}$. As in Remark \ref{rmk:upshot}, given any database instance on $\mcC$, be it atomic or not, one can forget all the foreign key information and be left with an unlinked instance. 

\end{example}

\subsub{Forgetting structure}

\begin{example}[Distributions to subsets]

Recall the Subset monad $\mcP$ and the Distribution monad $\Dist$ from Sections \ref{sec:subsets} and \ref{sec:distributions}, and let $X$ be a set. Recall that the support of a distribution $p\taking X\to[0,1]$ is the subset $\Supp(p)=\{x\in X\|p(x)\neq 0\}\ss X$. This notion of support induces a morphism of monads $\Dist\to\mcP$.

\end{example}

\begin{example}[Multisets to subsets]

Recall the Subset monad $\mcP$ and Multiset monad from Sections \ref{sec:subsets} and \ref{sec:multiset}, and let $X$ be a set. A multiset in $X$ can be conceived as a function $Y\to X$, and its image is a subset of $X$. By this process one obtains a morphism of monads $\text{Multiset}\to\mcP$.

\end{example}

\begin{example}[Lists to multisets]

Recall the List and Multiset monads from Sections \ref{sec:list} and \ref{sec:multiset}, and let $X$ be a set. For each natural number $n\in\NN$ we have a function $X^n\to X^n/\sim$ that forgets the order of $n$-element lists. This induces a morphism of monads $\List\to\text{Multiset}$.

\end{example}

\begin{example}

Recall the Atomic monad $\id_\Set$, the List monad, the non-empty list monad $\List_+$, and the Maybe monad $\Exc_\singleton$ from Sections \ref{sec:list} and \ref{ex:atomic and maybe}. There is a morphism which could be called the ``first element, if it exists" map $\List\to\Exc_\singleton$. Similarly, there is a ``first element" map $\List_+\to\id_\Set$.

\end{example}

\subsub{Tunable monads}

As explained in Section \ref{sec:tunable}, a tunable monad is a pair $(\mcI,P)$ where $P\taking\mcI\to\Monad_\Set$. For every object $i\in\Ob(\mcI)$ we have a monad $P(i)$ and for every arrow in $\mcI$ we have a morphism of monads. By Proposition \ref{prop:Kls a functor}, we can compose with the functor $\Kls{-}\taking\Monad_\Set\to\Cat$. So for any morphism $f\taking i\to i'$ in $\mcI$ and any schema $\mcC$ we have a functor $\mcC\bkls{P(i)}\to\mcC\bkls{P(i')}$ that functorially converts $P(i)$-instances into $P(i')$-instances.

In Section \ref{sec:tunable} we discussed the monads $\Exc_E$ and $\Inp^U$, for $E,U\in\Fin$, and $M$-annotated for monoids $M$. The value in these is found in the fact that as time goes on and the model evolves, the database architect may need to change parameters of the schema with minimal disturbance to users. For example, if at some point the architect wants to add a new sort of exception, or collapse two kinds of exception into one, he or she can do that by finding a function $E_{\text{old}}\to E_{\text{new}}$ from the old exception set to the new, and it will induce a functor that transforms databases instances with the old set of exceptions into instances with the new set.

\subsub{Others}

\begin{example}[Simulation]

Recall the Computation monad and the Experiments monad from Examples \ref{ex:computation} and \ref{ex:experiment}. We could imagine that simulation is a morphism of monads from the latter to the former, converting a description of an experiment into a computation.

\end{example}

\begin{example}[Programs take time]

Recall the Time-delay monad and the Computation monad from Examples \ref{ex:time delay} and \ref{ex:computation}. Counting the number of clock cycles induces a morphism of monads from the latter to the former.

\end{example}

\section{Future work}\label{sec:future}

The above work can be made far more flexible if we allow the choice of monad to vary over the schema. This way, some columns can be nullable and others not, or we could allow for lists in some areas of the schema and not in others. We will tackle this in an upcoming paper. It would also be interesting to consider how these variable monads and their associated instances would behave under change of schema functors $F\taking\mcC\to\mcD$. We also plan to investigate whether our work here can be nicely integrated with the ideas of \cite{BNT}, \cite{Gru}, \cite{LT}, and \cite{Maj}, in which one uses monads to handle collections. Finally, it seems fruitful to explore how the coincidence of (\ref{dia:riddle}) relates to Leinster's definitions of $\top$-multi-categories and operads in \cite[Chapter 4]{Le1}.

\bibliographystyle{amsalpha}

\end{document}